\documentclass[journal]{IEEEtran}

\usepackage[utf8]{inputenc}
\usepackage{color}
\usepackage{array}
\usepackage{verbatim}
\usepackage{float}
\usepackage{amsmath}
\usepackage{amsthm}
\usepackage{amssymb}
\usepackage{graphicx}
\usepackage{longtable}
\usepackage{multirow}
\usepackage[unicode=true,
bookmarks=false,
breaklinks=false,pdfborder={0 0 1},colorlinks=false]
{hyperref}
\hypersetup{
	colorlinks,bookmarksopen,bookmarksnumbered,citecolor=blue,urlcolor=blue}
\usepackage{cite}

\floatstyle{ruled}
\newfloat{algorithm}{tbp}{loa}
\providecommand{\algorithmname}{Algorithm}
\floatname{algorithm}{\protect\algorithmname}

\makeatletter
\let\oldforeign@language\foreign@language
\DeclareRobustCommand{\foreign@language}[1]{%
	\lowercase{\oldforeign@language{#1}}}

\let\oldforeign@language\foreign@language
\DeclareRobustCommand{\foreign@language}[1]{%
	\lowercase{\oldforeign@language{#1}}}

\ifCLASSINFOpdf
\else
\fi

\hyphenation{op-tical net-works semi-conduc-tor}


\newcommand{\MYfooter}{\smash{
		\hfil\parbox[t][\height][t]{\textwidth}{\centering
			\thepage}\hfil\hbox{}}}

\makeatletter

%
%

\def\ps@IEEEtitlepagestyle{%
	\def\@oddhead{\parbox[t][\height][t]{\textwidth}{\centering \scriptsize
			Personal use of this material is permitted. Permission from the author(s) and/or copyright holder(s), must be obtained for all other uses. Please contact us and provide details if you believe this document breaches copyrights.\\
			\noindent\makebox[\linewidth]{}
		}\hfil\hbox{}}%
	\def\@evenhead{\scriptsize\thepage \hfil \leftmark\mbox{}}%
	\def\@oddfoot{\parbox[t][\height][l]{\textwidth}{
			\vspace{-20pt}{\rule{\textwidth}{0.4pt}}\\ \footnotesize\underline{To cite this article:}
			{\bf{\textcolor{red}{H. A. Hashim, "Guaranteed Performance Nonlinear Observer for Simultaneous Localization and Mapping," IEEE Control Systems Letters, vol. 5, no. 1, pp. 91-96, 2021.}}} doi: \href{https://doi.org/10.1109/LCSYS.2020.3000266}{10.1109/LCSYS.2020.3000266}\\
			\noindent\makebox[\linewidth]
		}\hfil\hbox{}}%
	\def\@evenfoot{\MYfooter}}

\makeatother
\pagestyle{headings}
\addtolength{\footskip}{0\baselineskip}
\addtolength{\textheight}{-1\baselineskip}

\thispagestyle{empty}


\newtheorem{thm}{Theorem}
\newtheorem{rem}{Remark}

\newtheorem{assum}{Assumption}

\begin{document}
	\bstctlcite{IEEEexample:BSTcontrol}

\title{Guaranteed Performance Nonlinear Observer for Simultaneous Localization and Mapping}

\author{Hashim A. Hashim
	\thanks{This work was supported in part by Thompson Rivers University Internal research fund, RGS-2020/21 IRF, \# 102315.}
	\thanks{Corresponding author, H. A. Hashim is with the Department of Engineering and Applied Science, Thompson Rivers University, Kamloops, British Columbia, Canada, V2C-0C8, e-mail: hhashim@tru.ca.}
	
}

\markboth{}{Hashim \MakeLowercase{\textit{et al.}}: Guaranteed Performance Nonlinear Observer for SLAM}

\maketitle

\begin{abstract}
A geometric nonlinear observer algorithm for Simultaneous Localization
and Mapping (SLAM) developed on the Lie group of $\mathbb{SLAM}_{n}\left(3\right)$
is proposed. The presented novel solution estimates the vehicle's
pose (\textit{i.e}. attitude and position) with respect to landmarks
simultaneously positioning the reference features in the global frame.
The proposed estimator on manifold is characterized by predefined
measures of transient and steady-state performance. Dynamically reducing
boundaries guide the error function of the system to reduce asymptotically
to the origin from its starting position within a large given set.
The proposed observer has the ability to use the available velocity
and feature measurements directly. Also, it compensates for unknown
constant bias attached to velocity measurements. Unit-qauternion of
the proposed observer is presented. Numerical results reveal effectiveness
of the proposed observer.
\end{abstract}

\begin{IEEEkeywords}
Nonlinear filter algorithm, Simultaneous Localization and Mapping, asymptotic stability, systematic convergence,
pose, attitude, position, landmark, adaptive estimate, SLAM, SE(3),
SO(3).
\end{IEEEkeywords}

\IEEEpeerreviewmaketitle{}

\section{Introduction}

\IEEEPARstart{N}{avigation} solutions, in the age of autonomous vehicles, suitable
for both partially and entirely unknown environments are an absolute
necessity. Autonomous navigation systems are an integral part of a
variety of applications including household autonomous devices, mine
exploration, location of missing terrestrial, underwater vehicles
and others. The nature of these applications limits the usefulness
of absolute positioning systems, such as global positioning systems
(GPS) which require visibility of at least four satellites. In the
absence of GPS, other techniques are used. If pose of a robot or vehicle
is known, while the map of its surroundings is unknown, the problem
is referred to as a mapping problem \cite{thrun2002robotic}. On the
contrary, if the map of the environment is known, while the pose is
unknown, the problem is described as pose estimation \cite{hashim2019SE3Det,hua2015gradient,hashim2020SE3Stochastic,hashim2018SE3Stochastic}.
Simultaneous Localization and Mapping (SLAM) combines mapping and
pose estimation problems and requires the autonomous system to simultaneously
build a map of the environment and track its own pose (\textit{i.e}.
attitude and position) within that environment. SLAM problem can be
solved using a set of measurements available at the body-fixed frame
of the vehicle.

Over the last few decades, Gaussian filters played a significant role
in solving the SLAM problem by positioning both the vehicle and its
surrounding features. Commonly used algorithms include FastSLAM \cite{montemerlo2007fastslam},
incremental SLAM \cite{kaess2008isam}, particle filter \cite{bekris2006evaluation},
and invariant EKF \cite{zhang2017EKF_SLAM}. The SLAM algorithms proposed
in \cite{montemerlo2007fastslam,kaess2008isam,bekris2006evaluation,huang2007convergence,zhang2017EKF_SLAM}
are based on probabilistic approach. Over a decade ago, graphical
maximum likelihood algorithms have been widely explored \cite{grisetti2010tutorial,cadena2016past}.
Aside from Gaussian filtering methods, the true SLAM problem is a
dual estimation problem which is highly nonlinear in nature, and evolves
directly on the Lie group of $\mathbb{SLAM}_{n}\left(3\right)$ which
will be defined in the next Section. For instance, the pose dynamics
are modeled on the Lie group of the special Euclidean group $\mathbb{SE}\left(3\right)$.
Over the last ten years, several nonlinear observers developed directly
on $\mathbb{SE}\left(3\right)$ have been proposed, for instance \cite{hashim2019SE3Det,hashim2020SE3Stochastic}.
As a result, manifolds and the inheritance of the Lie group of $\mathbb{SE}\left(3\right)$
in the SLAM problem was studied \cite{strasdat2012local,forster2016manifold}.
A two stage observer for SLAM has been presented in \cite{johansen2016globally}
where the first stage consists of a nonlinear pose observer, while
the second stage is comprised of a Kalman filter used for feature
estimation. Nevertheless, the above approach did not capture the true
nonlinearity of the SLAM problem. Although the nonlinear observers
proposed in \cite{mahony2017SLAM,zlotnik2018SLAM} mimic the nonlinear
structure of the true SLAM problem, they lack measures of the error
convergence for the transient and steady-state performance. 

This work introduces a novel nonlinear observer evolved directly on
the Lie group of $\mathbb{SLAM}_{n}\left(3\right)$ using velocity
and feature measurements. In view of practical implementation and
similar to \cite{zlotnik2018SLAM}, the velocity measurements are
assumed to be corrupted with unknown bias. With the aim of achieving
systematic convergence of the SLAM error function, the error is constrained
to initiate among a predefined known large set and reduce systematically
and smoothly obeying predefined dynamically reducing boundaries and
to settle within a known small set, unlike to \cite{zlotnik2018SLAM}.
Prescribed performance function (PPF) captures the concept of systematic
convergence \cite{bechlioulis2008robust}. PPF forces the error to
be constrained by introducing a new form of unconstrained error, termed
transformed error. %
The nonlinear observer is designed such that the SLAM error function
as well as the transformed error can be proven to be globally asymptotically
stable. 

The Introduction section is followed by five sections, where Section
\ref{sec:Preliminaries-and-Math} overviews mathematical notation,
Lie group of $\mathbb{SE}\left(3\right)$, and $\mathbb{SLAM}_{n}\left(3\right)$.
Section \ref{sec:SE3_Problem-Formulation} introduces the SLAM problem
along with available measurements. Section \ref{sec:SLAM_Filter}
reformulates the SLAM problem to satisfy PPF and presents a nonlinear
observer design on $\mathbb{SLAM}_{n}\left(3\right)$ with systematic
convergence. Section \ref{sec:SE3_Simulations} includes simulation
results. Finally, Section \ref{sec:SE3_Conclusion} concludes the
work.

\section{Preliminaries and Math Notation \label{sec:Preliminaries-and-Math}}

Consider a vehicle traveling in three dimensional (3D) space. The
vehicle fixed body-frame is described by $\left\{ \mathcal{B}\right\} $
and the absolute fixed inertial-frame is described by $\left\{ \mathcal{I}\right\} $.
The set of real numbers, nonnegative real numbers, and real $n$-by-$m$
space, are defined by $\mathbb{R}$, $\mathbb{R}_{+}$, and $\mathbb{R}^{n\times m}$,
respectively. $\mathbf{I}_{n}$ refers to $n$-dimensional identity
matrix, $\underline{\mathbf{0}}_{n}$ describes a zero column vector.
For $x\in\mathbb{R}^{n}$ the Euclidean norm is $\left\Vert x\right\Vert =\sqrt{x^{\top}x}$.
Vehicle attitude is described by $R\in\mathbb{SO}\left(3\right)$
where $\mathbb{SO}\left(3\right)$ denotes Special Orthogonal Group
such that $\mathbb{SO}\left(3\right)=\{\left.R\in\mathbb{R}^{3\times3}\right|RR^{\top}=\mathbf{I}_{3}\text{, }{\rm det}\left(R\right)=+1\}$
with ${\rm det\left(\cdot\right)}$ representing a determinant, visit
\cite{hashim2019SO3Wiley,hashim2018SO3Stochastic}. $\boldsymbol{T}\in\mathbb{R}^{4\times4}$
describes the vehicle's pose in 3D space expressed as 
\begin{equation}
\boldsymbol{T}=\left[\begin{array}{cc}
R & P\\
\underline{\mathbf{0}}_{3}^{\top} & 1
\end{array}\right]\in\mathbb{SE}\left(3\right)\label{eq:T_SLAM}
\end{equation}
where $P\in\mathbb{R}^{3}$ refers to the vehicle's position, $R\in\mathbb{SO}\left(3\right)$
defines vehicle's attitude, and $\mathbb{SE}\left(3\right)$ refers
to Special Euclidean Group described by $\mathbb{SE}\left(3\right)=\{\left.\boldsymbol{T}\in\mathbb{R}^{4\times4}\right|R\in\mathbb{SO}\left(3\right),P\in\mathbb{R}^{3}\}$,
visit \cite{hashim2019SE3Det}. $\mathfrak{so}\left(3\right)$ is
the Lie-algebra of $\mathbb{SO}\left(3\right)$ described by $\mathfrak{so}\left(3\right)=\{\left.\left[y\right]_{\times}\in\mathbb{R}^{3\times3}\right|\left[y\right]_{\times}^{\top}=-\left[y\right]_{\times},y\in\mathbb{R}^{3}\}$
with $\left[y\right]_{\times}$ denoting a skew symmetric matrix
\[
\left[y\right]_{\times}=\left[\begin{array}{ccc}
0 & -y_{3} & y_{2}\\
y_{3} & 0 & -y_{1}\\
-y_{2} & y_{1} & 0
\end{array}\right]\in\mathfrak{so}\left(3\right),\hspace{1em}y=\left[\begin{array}{c}
y_{1}\\
y_{2}\\
y_{3}
\end{array}\right]
\]
$\mathfrak{se}\left(3\right)$ is the Lie-algebra of $\mathbb{SE}\left(3\right)$
with {\small{}
	\[
	\mathfrak{se}\left(3\right)=\left\{ \left[U\right]_{\wedge}\in\mathbb{R}^{4\times4}\left|\exists\Omega,V\in\mathbb{R}^{3}:\left[U\right]_{\wedge}=\left[\begin{array}{cc}
	\left[\Omega\right]_{\times} & V\\
	\underline{\mathbf{0}}_{3}^{\top} & 0
	\end{array}\right]\right.\right\} 
	\]
	where} $U=\left[\Omega^{\top},V^{\top}\right]\in\mathbb{R}^{6}$.
Consider that $\boldsymbol{T}\in\mathbb{SE}\left(3\right)$ as defined
in \eqref{eq:T_SLAM} and $U\in\mathbb{R}^{6}$. Define the adjoint
map ${\rm Ad}_{\boldsymbol{T}}:\mathbb{SE}\left(3\right)\times\mathfrak{se}\left(3\right)\rightarrow\mathfrak{se}\left(3\right)$
and the augmented adjoint map $\overline{{\rm Ad}}_{\boldsymbol{T}}:\mathbb{SE}\left(3\right)\rightarrow\mathbb{R}^{6\times6}$
as below
\begin{equation}
\begin{cases}
{\rm Ad}_{\boldsymbol{T}}\left(\left[U\right]_{\wedge}\right) & =\boldsymbol{T}\left[U\right]_{\wedge}\boldsymbol{T}^{-1}\in\mathfrak{se}\left(3\right)\\
\overline{{\rm Ad}}_{\boldsymbol{T}} & =\left[\begin{array}{cc}
R & 0_{3\times3}\\
\left[P\right]_{\times}R & R
\end{array}\right]\in\mathbb{R}^{6\times6}
\end{cases}\label{eq:SLAM_Adjoint}
\end{equation}
In view of \eqref{eq:SLAM_Adjoint}, one finds
\begin{equation}
{\rm Ad}_{\boldsymbol{T}}\left(\left[U\right]_{\wedge}\right)=\left[\,\overline{{\rm Ad}}_{\boldsymbol{T}}U\right]_{\wedge},\hspace{1em}\boldsymbol{T}\in\mathbb{SE}\left(3\right),U\in\mathbb{R}^{6}\label{eq:SLAM_Adjoint_MAP}
\end{equation}
Define the sub-manifolds $\overset{\circ}{\mathcal{M}}$ and $\overline{\mathcal{M}}$
of $\mathbb{R}^{4}$ as
\begin{align*}
\overset{\circ}{\mathcal{M}} & =\left\{ \left.\overset{\circ}{y}=\left[\begin{array}{cc}
y^{\top} & 0\end{array}\right]^{\top}\in\mathbb{R}^{4}\right|y\in\mathbb{R}^{3}\right\} \\
\overline{\mathcal{M}} & =\left\{ \left.\overline{y}=\left[\begin{array}{cc}
y^{\top} & 1\end{array}\right]^{\top}\in\mathbb{R}^{4}\right|y\in\mathbb{R}^{3}\right\} 
\end{align*}
Let the Lie group of $\mathbb{SLAM}_{n}\left(3\right)=\mathbb{SE}\left(3\right)\times\overline{\mathcal{M}}^{n}$
be
\begin{equation}
\mathbb{SLAM}_{n}\left(3\right)=\left\{ X=\left(\boldsymbol{T},\overline{{\rm p}}\right)\left|\boldsymbol{T}\in\mathbb{SE}\left(3\right),\overline{{\rm p}}\in\overline{\mathcal{M}}^{n}\right.\right\} \label{eq:SLAM_SLAM_X}
\end{equation}
where $\overline{{\rm p}}=\left[\overline{{\rm p}}_{1},\overline{{\rm p}}_{2},\ldots,\overline{{\rm p}}_{n}\right]\in\overline{\mathcal{M}}^{n}$
and $\overline{\mathcal{M}}^{n}=\overline{\mathcal{M}}\times\overline{\mathcal{M}}\times\cdots\times\overline{\mathcal{M}}$.
Describe the tangent space at the identity element of $X=\left(\boldsymbol{T},\overline{{\rm p}}\right)\in\mathbb{SLAM}_{n}\left(3\right)$
as $\mathfrak{slam}_{n}\left(3\right)=\mathfrak{se}\left(3\right)\times\overset{\circ}{\mathcal{M}}^{n}$
\begin{equation}
\mathfrak{slam}_{n}\left(3\right)=\left\{ \mathcal{Y}=(\left[U\right]_{\wedge},\overset{\circ}{{\rm v}})\left|\left[U\right]_{\wedge}\in\mathfrak{se}\left(3\right),\overset{\circ}{{\rm v}}\in\overset{\circ}{\mathcal{M}}^{n}\right.\right\} \label{eq:SLAM_SLAM_Y}
\end{equation}
where $\overset{\circ}{{\rm v}}=[\overset{\circ}{{\rm v}}_{1},\overset{\circ}{{\rm v}}_{2},\ldots,\overset{\circ}{{\rm v}}_{n}]\in\overset{\circ}{\mathcal{M}}^{n}$,
$\overset{\circ}{\mathcal{M}}^{n}=\overset{\circ}{\mathcal{M}}\times\overset{\circ}{\mathcal{M}}\times\cdots\times\overset{\circ}{\mathcal{M}}$,
and $\overset{\circ}{{\rm v}}_{i}=\left[{\rm v}_{i}^{\top},0\right]^{\top}\in\overset{\circ}{\mathcal{M}}$.
Also, $\overline{{\rm p}}_{i}=\left[{\rm p}_{i}^{\top},1\right]^{\top}\in\overline{\mathcal{M}}$.

\section{Problem Formulation\label{sec:SE3_Problem-Formulation}}

Consider a vehicle moving in 3D space within a map that has $n$ features.
SLAM problem is the process of estimating vehicle pose $\boldsymbol{T}\in\mathbb{SE}\left(3\right)$,
and at the same time estimating $n$ features within the environment
$\overline{{\rm p}}=\left[\overline{{\rm p}}_{1},\overline{{\rm p}}_{2},\ldots,\overline{{\rm p}}_{n}\right]\in\overline{\mathcal{M}}^{n}$.
Fig. \ref{fig:SLAM} provides a conceptual illustration of the SLAM
estimation problem. 
\begin{figure*}
	\centering{}\includegraphics[scale=0.63]{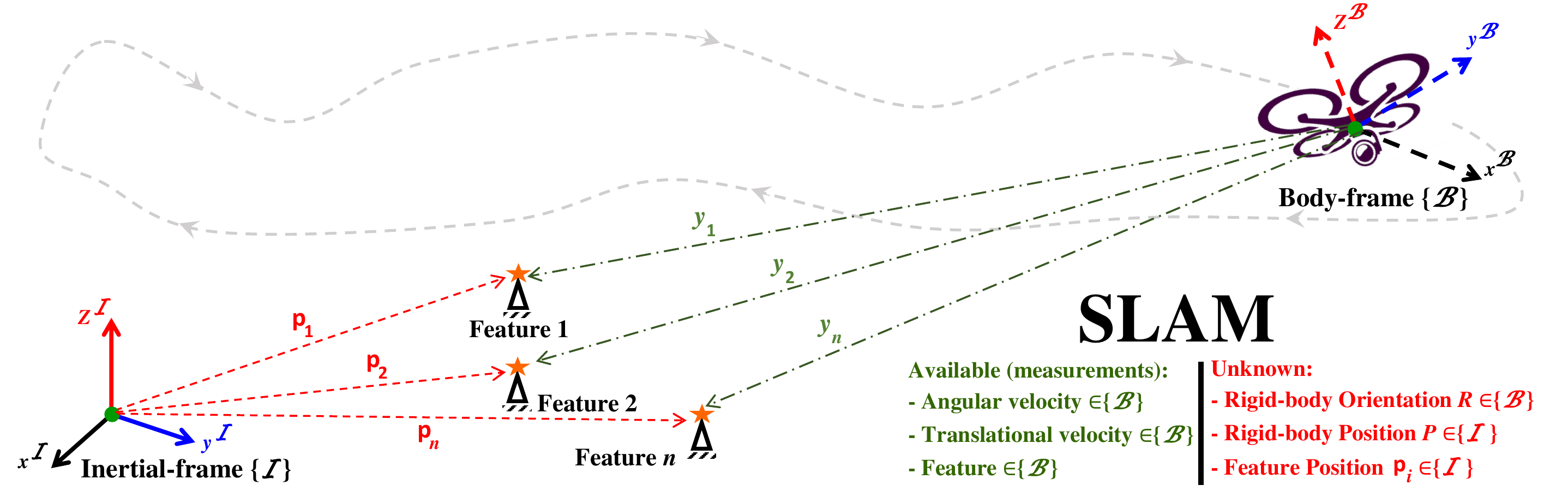}\caption{SLAM estimation problem.}
	\label{fig:SLAM}
\end{figure*}

Consider $R\in\mathbb{SO}\left(3\right)$ and $P\in\mathbb{R}^{3}$
to be vehicle's attitude (orientation) and position, respectively,
and ${\rm p}_{i}\in\mathbb{R}^{3}$ to be the $i$th feature position
in the map, where $R\in\left\{ \mathcal{B}\right\} $, and $P,{\rm p}_{i}\in\left\{ \mathcal{I}\right\} $
for all $i=1,2,\ldots,n$. Let $X=(\boldsymbol{T},\overline{{\rm p}})\in\mathbb{SLAM}_{n}\left(3\right)$
be the true pose and features with $\boldsymbol{T}\in\mathbb{SE}\left(3\right)$
and $\overline{{\rm p}}\in\overline{\mathcal{M}}^{n}$. Let $\mathcal{Y}=(\left[U\right]_{\wedge},\overset{\circ}{{\rm v}})\in\mathfrak{slam}_{n}\left(3\right)$
be the true group velocity where $\overset{\circ}{{\rm v}}=[\overset{\circ}{{\rm v}}_{1},\overset{\circ}{{\rm v}}_{2},\ldots,\overset{\circ}{{\rm v}}_{n}]\in\overset{\circ}{\mathcal{M}}^{n}$
and $U,\overset{\circ}{{\rm v}}\in\left\{ \mathcal{B}\right\} $.
The true motion dynamics of SLAM are as follows:
\begin{equation}
\begin{cases}
\dot{\boldsymbol{T}} & =\boldsymbol{T}\left[U\right]_{\wedge}\\
\dot{{\rm p}}_{i} & =R{\rm v}_{i},\hspace{1em}\forall i=1,2,\ldots,n
\end{cases}\label{eq:SLAM_True_dot}
\end{equation}
where $U=\left[\Omega^{\top},V^{\top}\right]^{\top}$, $\Omega\in\mathbb{R}^{3}$
is the true angular velocity, $V\in\mathbb{R}^{3}$ is the true translational
velocity, while ${\rm v}_{i}\in\mathbb{R}^{3}$ describes the $i$th
linear velocity of ${\rm p}_{i}$. $X$ is unknown and can be obtained
with the aid of 1) $\mathcal{Y}_{m}=\left(\left[U_{m}\right]_{\wedge},\overset{\circ}{{\rm v}}_{m}\right)\in\mathfrak{slam}_{n}\left(3\right)$
which represents velocity measurements and 2) $\overline{y}_{i}\in\overline{\mathcal{M}}$
which is the $i$th feature measurement for all $\mathcal{Y}_{m},\overline{y}_{i}\in\left\{ \mathcal{B}\right\} $.
Since features are fixed to $\left\{ \mathcal{I}\right\} $, $\dot{{\rm p}}_{i}=\underline{\mathbf{0}}_{3}$
and consequently ${\rm v}_{i}=\underline{\mathbf{0}}_{3}$. The measurement
of the group velocity $U_{m}=\left[\Omega_{m}^{\top},V_{m}^{\top}\right]^{\top}$
is 
\begin{equation}
U_{m}=U+b_{U}+n_{U}\in\mathbb{R}^{6}\label{eq:SLAM_TVelcoity}
\end{equation}
where $b_{U}=\left[b_{\Omega}^{\top},b_{V}^{\top}\right]^{\top}$
is unknown constant bias and $n_{U}$ denotes random noise. The $i$th
feature measurement in the body-frame is described by
\begin{equation}
\overline{y}_{i}=\boldsymbol{T}^{-1}\overline{{\rm p}}_{i}+\overset{\circ}{b}_{i}^{y}+\overset{\circ}{n}_{i}^{y}\in\overline{\mathcal{M}},\hspace{1em}\forall i=1,2,\ldots,n\label{eq:SLAM_Vec_Landmark}
\end{equation}
where $\overset{\circ}{b}_{i}^{y}\in\overset{\circ}{\mathcal{M}}$
and $\overset{\circ}{n}_{i}^{y}\in\overset{\circ}{\mathcal{M}}$ represent
unknown constant bias and random noise, respectively. Also, $\overline{{\rm p}}_{i}=[{\rm p}_{i}^{\top},1]^{\top}\in\overline{\mathcal{M}}$
denotes the $i$th feature. In our analysis, $n_{U}$, $b_{i}^{y}$,
and $n_{i}^{y}$ are zeros.

\begin{assum}\label{Assumption:Feature}Three or more features available
	for measurement that define a plane with $\overline{y}=\left[\overline{y}_{1},\overline{y}_{2},\ldots,\overline{y}_{n}\right]\in\overline{\mathcal{M}}^{n}$.\end{assum}

Define the estimate of pose as
\[
\hat{\boldsymbol{T}}=\left[\begin{array}{cc}
\hat{R} & \hat{P}\\
\underline{\mathbf{0}}_{3}^{\top} & 1
\end{array}\right]\in\mathbb{SE}\left(3\right)
\]
where $\hat{R}$ and $\hat{P}$ represent estimates of the true orientation
and position, respectively. Define $\hat{{\rm p}}_{i}$ as the estimate
of the true $i$th feature ${\rm p}_{i}$. Consider the error between
$\boldsymbol{T}$ and $\hat{\boldsymbol{T}}$ as
\begin{align}
\tilde{\boldsymbol{T}}=\hat{\boldsymbol{T}}\boldsymbol{T}^{-1} & =\left[\begin{array}{cc}
\hat{R} & \hat{P}\\
\underline{\mathbf{0}}_{3}^{\top} & 1
\end{array}\right]\left[\begin{array}{cc}
R^{\top} & -R^{\top}P\\
\underline{\mathbf{0}}_{3}^{\top} & 1
\end{array}\right]\nonumber \\
& =\left[\begin{array}{cc}
\tilde{R} & \tilde{P}\\
\underline{\mathbf{0}}_{3}^{\top} & 1
\end{array}\right]\label{eq:SLAM_T_error}
\end{align}
with $\tilde{R}=\hat{R}R^{\top}$and $\tilde{P}=\hat{P}-\tilde{R}P$
describing error in orientation and position, respectively. Define
the error between $\hat{{\rm p}}_{i}$ and ${\rm p}_{i}$ as
\begin{equation}
\overset{\circ}{e}_{i}=\overline{\hat{{\rm p}}}_{i}-\tilde{\boldsymbol{T}}\,\overline{{\rm p}}_{i}\label{eq:SLAM_e}
\end{equation}
where $\overset{\circ}{e}_{i}=[e_{i}^{\top},0]^{\top}\in\overset{\circ}{\mathcal{M}}$
and $\overline{\hat{{\rm p}}}_{i}=\left[\hat{{\rm p}}_{i}^{\top},1\right]^{\top}\in\overline{\mathcal{M}}$.
In the light of $\tilde{\boldsymbol{T}}$, definition in \eqref{eq:SLAM_T_error},
and \eqref{eq:SLAM_Vec_Landmark}, 
\begin{align}
\overset{\circ}{e}_{i} & =\overline{\hat{{\rm p}}}_{i}-\hat{\boldsymbol{T}}\boldsymbol{T}^{-1}\,\overline{{\rm p}}_{i}\nonumber \\
& =\overline{\hat{{\rm p}}}_{i}-\hat{\boldsymbol{T}}\,\overline{y}_{i}\label{eq:SLAM_e_Final}
\end{align}
Accordingly, $\overset{\circ}{e}_{i}=[(\tilde{{\rm p}}_{i}-\tilde{P})^{\top},0]^{\top}$
where $\tilde{{\rm p}}_{i}=\hat{{\rm p}}_{i}-\tilde{R}{\rm p}_{i}$
represents the $i$th error in feature estimation, and $\tilde{P}=\hat{P}-\tilde{R}P$
as expressed in \eqref{eq:SLAM_T_error}. To achieve adaptive estimation,
let $\hat{b}_{U}=[\hat{b}_{\Omega}^{\top},\hat{b}_{V}^{\top}]^{\top}$
be the estimate of the unknown bias $b_{U}$ and let the error between
them be
\begin{equation}
\tilde{b}_{U}=b_{U}-\hat{b}_{U}\in\mathbb{R}^{6}\label{eq:SLAM_b_error}
\end{equation}
with $\tilde{b}_{U}=[\tilde{b}_{\Omega}^{\top},\tilde{b}_{V}^{\top}]^{\top}$.
As mentioned previously, the true SLAM kinematics in \eqref{eq:SLAM_True_dot}
are nonlinear modeled on Lie group of $\mathbb{SLAM}_{n}\left(3\right)=\mathbb{SE}\left(3\right)\times\overline{\mathcal{M}}^{n}$
such that $X=(\boldsymbol{T},\overline{{\rm p}})\in\mathbb{SLAM}_{n}\left(3\right)$.
Also, the tangent space of $X$ is $\mathfrak{slam}_{n}\left(3\right)=\mathfrak{se}\left(3\right)\times\overset{\circ}{\mathcal{M}}^{n}$
with $\mathcal{Y}=([U]_{\wedge},\overset{\circ}{{\rm v}})\in\mathfrak{slam}_{n}\left(3\right)$.
Therefore, the proposed observer design has to 1) consider the nonlinear
nature of the true SLAM problem and 2) be modeled on Lie group of
$\mathbb{SLAM}_{n}\left(3\right)$. Therefore, the observer proposed
in the next section is defined by $\hat{X}=(\hat{\boldsymbol{T}},\overline{\hat{{\rm p}}})\in\mathbb{SLAM}_{n}\left(3\right)$
mimics the structure of the true SLAM problem with its tangent space
being $\hat{\mathcal{Y}}=([\hat{U}]_{\wedge},\overset{\circ}{\hat{{\rm v}}})\in\mathfrak{slam}_{n}\left(3\right)$.

\section{Nonlinear Observer Design with Guaranteed Performance \label{sec:SLAM_Filter}}

This section reformulates the SLAM kinematics such that the error
function is guided by prescribed measures of transient and steady-state
performance. Next, nonlinear observer design characterized by systematic
convergence and reliant on available measurements is proposed.

\subsection{Guaranteed Performance\label{subsec:Systematic_Convergence}}

The key objective of this subsection is to force $e_{i}=[e_{i,1},e_{i,2},e_{i,3}]^{\top}$
described in \eqref{eq:SLAM_e_Final} to obey dynamically reducing
transient boundaries and settle down within the narrow bounds adjusted
by the user. A positive and time decreasing prescribed performance
function (PPF) $\xi_{i,k}\left(t\right)$ with the map of $\xi_{i,k}:\mathbb{R}_{+}\to\mathbb{R}_{+}$
\cite{bechlioulis2008robust,hashim2019SO3Wiley} is employed to guide
$e_{i,k}$ to initiate within a given large set $\xi_{i}^{0}=\left[\xi_{i,1}^{0},\xi_{i,2}^{0},\xi_{i,3}^{0}\right]^{\top}\in\mathbb{R}^{3}$
and decay exponentially in accordance with a known convergence factor
$\ell_{i}=\left[\ell_{i,1},\ell_{i,2},\ell_{i,3}\right]^{\top}\in\mathbb{R}^{3}$
causing $e_{i,k}$ to stay within a given small set $\xi_{i}^{\infty}=\left[\xi_{i,1}^{\infty},\xi_{i,2}^{\infty},\xi_{i,3}^{\infty}\right]^{\top}\in\mathbb{R}^{3}$
such that
\begin{equation}
\xi_{i,k}\left(t\right)=\left(\xi_{i,k}^{0}-\xi_{i,k}^{\infty}\right)\exp\left(-\ell_{i,k}t\right)+\xi_{i,k}^{\infty}\label{eq:SLAM_Presc}
\end{equation}
for all $i=1,2,\ldots,n$ and $k=1,2,3$. The objective is $e_{i,k}=e_{i,k}\left(t\right)$
follows predefined convergence properties of $\xi_{i,k}=\xi_{i,k}\left(t\right)$
given that one of the following expressions is met:
\begin{align}
-\delta_{i,k}\xi_{i,k}<e_{i,k}<\xi_{i,k}, & \text{ if }e_{i,k}\left(0\right)\geq0\label{eq:SLAM_ePos}\\
-\xi_{i,k}<e_{i,k}<\delta_{i,k}\xi_{i,k}, & \text{ if }e_{i,k}\left(0\right)<0\label{eq:SLAM_eNeg}
\end{align}
where $\delta_{i,k}\in\left[0,1\right]$. Fig. \ref{fig:SO3PPF_2}
provides an ample demonstration of the desired systematic convergence. 

\begin{figure}[h!]
	\centering{}\includegraphics[scale=0.32]{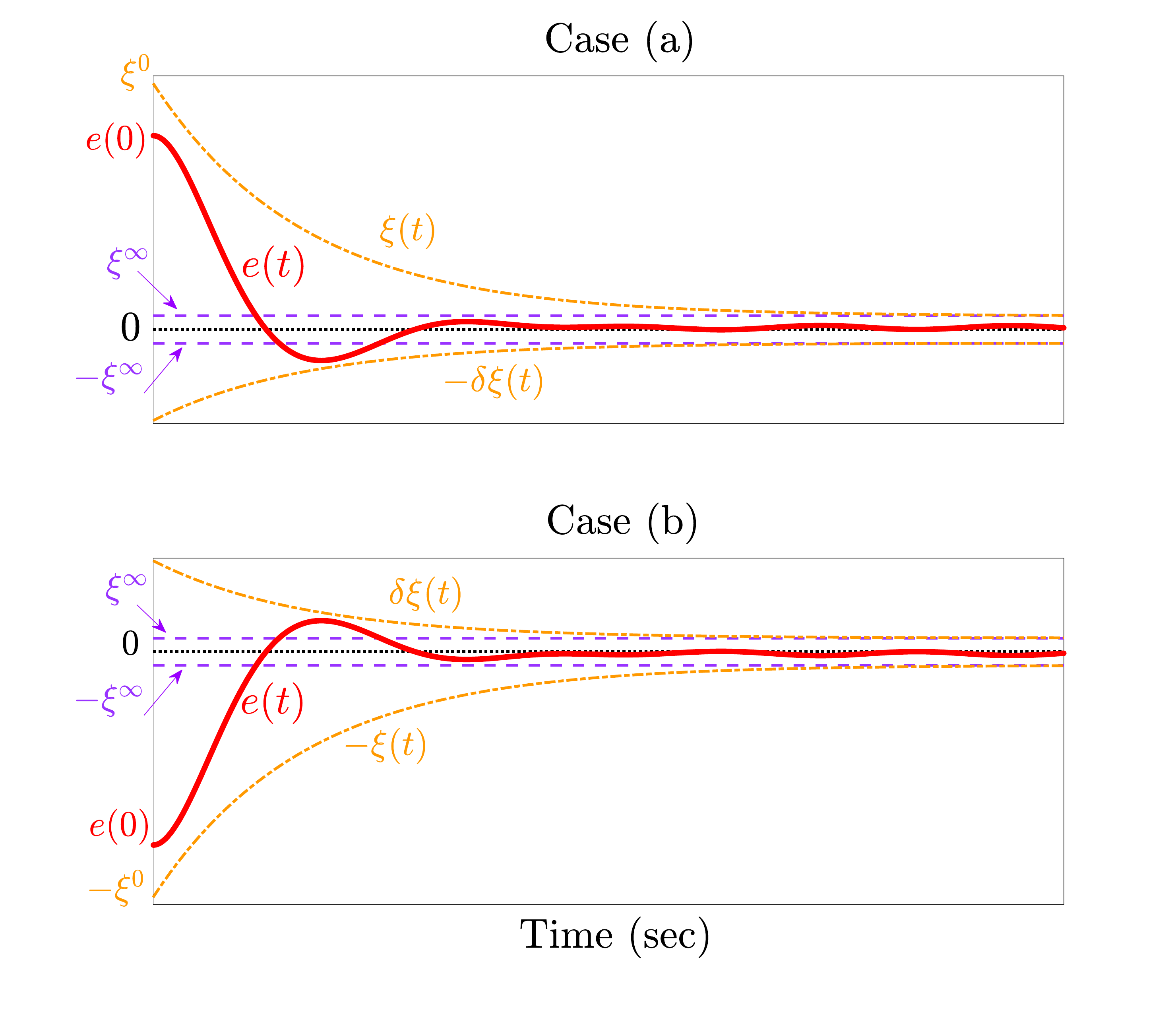} \caption{$e_{i,k}$ based on systematic convergence (a) Eq. \eqref{eq:SLAM_ePos};
		(b) Eq. \eqref{eq:SLAM_eNeg}.}
	\label{fig:SO3PPF_2}
\end{figure}

\begin{rem}
	\label{SE3PPF_rem3}\cite{hashim2019SO3Wiley} For known $e_{i,k}\left(0\right)$
	and granted that either provision \eqref{eq:SLAM_ePos} and \eqref{eq:SLAM_eNeg}
	is fulfilled, the maximum undershoot/overshoot is guaranteed to adhere
	$\pm\delta\xi_{i,k}$ and the steady-state error to follows $\pm\xi_{i,k}^{\infty}$
	in accordance with Fig. \ref{fig:SO3PPF_2}. 
\end{rem}
Define the error $e_{i,k}$ as
\begin{equation}
e_{i,k}=\xi_{i,k}\mathcal{F}(E_{i,k})\label{eq:SLAM_e_Trans}
\end{equation}
where $\xi_{i,k}\in\mathbb{R}$ is as in \eqref{eq:SLAM_Presc}, $E_{i,k}\in\mathbb{R}$
denotes unconstrained or transformed error, and $\mathcal{F}(E_{i,k})$
describes a smooth function that adheres to Assumption \ref{Assum:SE3PPF_1}:

\begin{assum}\label{Assum:SE3PPF_1} $\mathcal{F}(E_{i,k})$ is a
	smooth function with the following characteristics \cite{bechlioulis2008robust}:
	
	\begin{enumerate}
		\item[1)] Strictly increasing, 
		\item[2)] Constrained by
		\[
		\begin{cases}
		-\delta_{i,k}<\mathcal{F}(E_{i,k})<\delta_{i,k}, & \text{ if }e_{i,k}\left(0\right)\geq0\\
		-\bar{\delta}_{i,k}<\mathcal{F}(E_{i,k})<\underline{\delta}_{i,k}, & \text{ if }e_{i,k}\left(0\right)<0
		\end{cases}
		\]
		with $\bar{\delta}_{i,k},\underline{\delta}_{i,k}>0$ and $\underline{\delta}_{i,k}\leq\bar{\delta}_{i,k}$ 
		\item[3)] $\underset{E_{i,k}\rightarrow-\infty}{\lim}\mathcal{F}(E_{i,k})=-\underline{\delta}_{i,k},$
		and $\underset{E_{i,k}\rightarrow+\infty}{\lim}\mathcal{F}(E_{i,k})=\bar{\delta}_{i,k}$
		$\text{if }e_{i,k}\left(0\right)\geq0$ or $\underset{E_{i,k}\rightarrow-\infty}{\lim}\mathcal{F}(E_{i,k})=-\bar{\delta}_{i,k}$
		and $\underset{E_{i,k}\rightarrow+\infty}{\lim}\mathcal{F}(E_{i,k})=\underline{\delta}_{i,k}$
		$\text{if }e_{i,k}\left(0\right)<0$.
	\end{enumerate}
\end{assum}Define $\mathcal{F}\left(E_{i,k}\right)$ as below
\begin{equation}
\mathcal{F}\left(E_{i,k}\right)=\frac{\bar{\delta}_{i,k}\exp(E_{i,k})-\underline{\delta}_{i,k}\exp(-E_{i,k})}{\exp(E_{i,k})+\exp(-E_{i,k})}\label{eq:SLAM_Smooth}
\end{equation}
with $\bar{\delta}_{i,k}\geq\underline{\delta}_{i,k}$ if $e_{i,k}\left(0\right)\geq0$
and $\underline{\delta}_{i,k}\geq\bar{\delta}_{i,k}$ if $e_{i,k}\left(0\right)<0$.

The inverse transformation of \eqref{eq:SLAM_Smooth} gives $E_{i,k}=E_{i,k}(e_{i,k},\xi_{i,k})$
\begin{equation}
\begin{aligned}E_{i,k}= & \mathcal{F}^{-1}(e_{i,k}/\xi_{i,k})=\frac{1}{2}\text{ln}\frac{\underline{\delta}_{i,k}+e_{i,k}/\xi_{i,k}}{\bar{\delta}_{i,k}-e_{i,k}/\xi_{i,k}}\end{aligned}
\label{eq:SLAM_trans2}
\end{equation}
where $\bar{\delta}_{i,k}\geq\underline{\delta}_{i,k}$ if $e_{i,k}\left(0\right)\geq0$
and $\underline{\delta}_{i,k}\geq\bar{\delta}_{i,k}$ if $e_{i,k}\left(0\right)<0$.
Let
\begin{align}
\eta_{i,k} & =\frac{1}{2\xi_{i,k}}\frac{\partial\mathcal{F}^{-1}(e_{i,k}/\xi_{i,k})}{\partial(e_{i,k}/\xi_{i,k})}\nonumber \\
& =\frac{1}{2\xi_{i,k}}\left(\frac{1}{\underline{\delta}_{i,k}+e_{i,k}/\xi_{i,k}}+\frac{1}{\bar{\delta}_{i,k}-e_{i,k}/\xi_{i,k}}\right)\label{eq:SLAM_Aux1}
\end{align}
for all $i=1,2,\ldots,n$ and $k=1,2,3$. In the light of \eqref{eq:SLAM_Aux1},
define
\begin{align}
\begin{cases}
\mu_{i} & ={\rm diag}\left(\frac{\dot{\xi}_{i,1}}{\xi_{i,1}},\frac{\dot{\xi}_{i,2}}{\xi_{i,2}},\frac{\dot{\xi}_{i,3}}{\xi_{i,3}}\right)\\
\Lambda_{i} & ={\rm diag}\left(\eta_{i,1},\eta_{i,2},\eta_{i,3}\right)
\end{cases} & ,\hspace{1em}\forall i=1,2,\ldots,n\label{eq:SLAM_Aux2}
\end{align}
To this end, the transformed error dynamics of $E_{i}=\left[E_{i,1},E_{i,2},E_{i,3}\right]^{\top}\in\mathbb{R}^{3}$
become equivalent to
\begin{equation}
\dot{E}_{i}=\Lambda_{i}\left(\dot{e}_{i}-\mu_{i}e_{i}\right),\hspace{1em}\forall i=1,2,\ldots,n\label{eq:SLAM_Trans_dot}
\end{equation}
Note that $\mu_{i}$ is a vanishing element where $\mu_{i}\rightarrow0$
as $t\rightarrow\infty$.

\subsection{Nonlinear Observer Design\label{subsec:Det_without_IMU}}

Consider the following nonlinear observer

\begin{align}
\dot{\hat{\boldsymbol{T}}} & =\hat{\boldsymbol{T}}\left[U_{m}-\hat{b}_{U}-W_{U}\right]_{\wedge}\label{eq:SLAM_T_est_dot_f2}\\
\dot{{\rm \hat{p}}}_{i} & =-k_{p}\left(\Lambda_{i}+\Lambda_{i}^{-1}\right)E_{i},\hspace{1em}i=1,2,\ldots,n\label{eq:SLAM_p_est_dot_f2}\\
\dot{\hat{b}}_{U} & =-\sum_{i=1}^{n}\frac{\Gamma}{\alpha_{i}}\overline{{\rm Ad}}_{\hat{\boldsymbol{T}}}^{\top}\left[\begin{array}{c}
\left[\hat{R}y_{i}+\hat{P}\right]_{\times}\\
\mathbf{I}_{3}
\end{array}\right]\Lambda_{i}E_{i}\label{eq:SLAM_b_est_dot_f2}\\
W_{U} & =-\sum_{i=1}^{n}k_{w}\overline{{\rm Ad}}_{\hat{\boldsymbol{T}}^{-1}}\left[\begin{array}{c}
\left[\hat{R}y_{i}+\hat{P}\right]_{\times}\\
\mathbf{I}_{3}
\end{array}\right]\Lambda_{i}E_{i}\label{eq:SLAM_W_f2}
\end{align}
where $\Lambda_{i}$ and ${\rm \mu_{i}}$ are defined in \eqref{eq:SLAM_Aux2},
$k_{p}$, $k_{w}$, $\Gamma$, and $\alpha_{i}$ are positive constants,
$W_{U}=\left[W_{\Omega}^{\top},W_{V}^{\top}\right]^{\top}\in\mathbb{R}^{6}$
denotes a correction factor, $\hat{b}_{U}=\left[\hat{b}_{\Omega}^{\top},\hat{b}_{V}^{\top}\right]^{\top}\in\mathbb{R}^{6}$
is the estimate of $b_{U}$ for all $W_{\Omega},W_{V},\hat{b}_{\Omega},\hat{b}_{V}\in\mathbb{R}^{3}$.
The unit-quaternion representation of the proposed observer is presented
in \nameref{sec:SO3_PPF_STCH_AppendixA}.
\begin{thm}
	\label{thm:PPF}Consider the SLAM dynamics $\dot{X}=\left(\dot{\boldsymbol{T}},\dot{\overline{{\rm p}}}\right)$
	in \eqref{eq:SLAM_True_dot} combined with velocity measurements ($U_{m}=U+b_{U}$)
	and output measurements ($\overline{y}_{i}=\boldsymbol{T}^{-1}\overline{{\rm p}}_{i}$)
	for all $i=1,2,\ldots,n$. Suppose that Assumption \ref{Assumption:Feature}
	holds and the observer design is as in \eqref{eq:SLAM_T_est_dot_f2},
	\eqref{eq:SLAM_p_est_dot_f2}, \eqref{eq:SLAM_b_est_dot_f2}, and
	\eqref{eq:SLAM_W_f2}. Select the design parameters $k_{p}$, $k_{w}$,
	$\Gamma$, and $\alpha_{i}$ as positive constants and $\bar{\delta}_{i,k}=\underline{\delta}_{i,k}$
	$\forall i=1,2,\ldots,n$ and $k=1,2,3$. Define the following set
	\begin{align}
	\mathcal{S}= & \{(E_{1},E_{1},\ldots,E_{n})\in\mathbb{R}^{3}\times\mathbb{R}^{3}\times\cdots\times\mathbb{R}^{3}\nonumber \\
	& \hspace{7.5em}|E_{i}=\underline{\mathbf{0}}_{3}\forall i=1,2,\ldots,n\}\label{eq:SLAM_Set2}
	\end{align}
	Then, for $E_{i}\left(0\right)\ensuremath{\in\mathcal{L}_{\infty}}$,
	(1) the error $(E_{1},E_{2},\ldots,E_{n})$ exponentially approaches
	$\mathcal{S}$, (2) the error $(e_{1},e_{2},\ldots,e_{n})$ asymptotically
	approaches $(\underline{\mathbf{0}}_{3},\underline{\mathbf{0}}_{3},\ldots,\underline{\mathbf{0}}_{3})$,
	(3) $\tilde{b}_{U}$ asymptotically converges to the origin, and (4)
	there exists a constant matrix $R_{c}\in\mathbb{SO}\left(3\right)$
	and a constant vector $P_{c}\in\mathbb{R}^{3}$ with $\lim_{t\rightarrow\infty}\tilde{R}=R_{c}$
	and $\lim_{t\rightarrow\infty}\tilde{P}=P_{c}$.
\end{thm}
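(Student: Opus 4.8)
The plan is to push the whole problem into the transformed coordinates $E_i$ together with the bias error $\tilde{b}_U$, construct a single quadratic Lyapunov function whose cross terms are annihilated by the adaptive law, and read off the four claims in sequence. First I would differentiate the feature error \eqref{eq:SLAM_e_Final}. Because the landmarks are inertially fixed, $\overline{{\rm p}}_i$ is constant, so $\dot{\overline{y}}_i=\tfrac{d}{dt}(\boldsymbol{T}^{-1})\overline{{\rm p}}_i=-\left[U\right]_{\wedge}\overline{y}_i$. Substituting the pose observer \eqref{eq:SLAM_T_est_dot_f2} with $\hat{U}=U_m-\hat{b}_U-W_U$ and using $U_m=U+b_U$ gives $U-\hat{U}=W_U-\tilde{b}_U$, so $\dot{\overset{\circ}{e}}_i=\dot{\overset{\circ}{\hat{{\rm p}}}}_i+\hat{\boldsymbol{T}}\left[W_U-\tilde{b}_U\right]_{\wedge}\overline{y}_i$. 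Writing $\hat{\boldsymbol{T}}\left[X\right]_{\wedge}\overline{y}_i=\hat{\boldsymbol{T}}\left[X\right]_{\wedge}\hat{\boldsymbol{T}}^{-1}(\hat{\boldsymbol{T}}\overline{y}_i)$ and invoking the adjoint identity \eqref{eq:SLAM_Adjoint_MAP}, the spatial part collapses to $\Phi_i^{\top}\overline{{\rm Ad}}_{\hat{\boldsymbol{T}}}X$, where $\Phi_i=\left[\begin{smallmatrix}\left[\hat{R}y_i+\hat{P}\right]_{\times}\\ \mathbf{I}_3\end{smallmatrix}\right]$ is exactly the matrix appearing in \eqref{eq:SLAM_b_est_dot_f2}–\eqref{eq:SLAM_W_f2}. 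Since $\overline{{\rm Ad}}$ is a homomorphism, $\overline{{\rm Ad}}_{\hat{\boldsymbol{T}}}W_U=-k_w\sum_{j}\Phi_j\Lambda_jE_j$, and inserting \eqref{eq:SLAM_p_est_dot_f2} and \eqref{eq:SLAM_Trans_dot} yields the closed loop
\[
\dot{E}_i=-k_p\Lambda_i(\Lambda_i^2+\mathbf{I}_3)\Lambda_i^{-1}E_i-k_w\Lambda_i\Phi_i^{\top}\sum_{j=1}^{n}\Phi_j\Lambda_jE_j-\Lambda_i\Phi_i^{\top}\overline{{\rm Ad}}_{\hat{\boldsymbol{T}}}\tilde{b}_U-\Lambda_i\mu_ie_i .
\]

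Next I would take $\mathcal{L}=\tfrac12\sum_i\tfrac{1}{\alpha_i}E_i^{\top}E_i+\tfrac{1}{2\Gamma}\tilde{b}_U^{\top}\tilde{b}_U$. Differentiating and substituting the bias law \eqref{eq:SLAM_b_est_dot_f2} (with $\dot{\tilde{b}}_U=-\dot{\hat{b}}_U$), the term $\sum_i\tfrac{1}{\alpha_i}E_i^{\top}\Lambda_i\Phi_i^{\top}\overline{{\rm Ad}}_{\hat{\boldsymbol{T}}}\tilde{b}_U$ produced by the $E$-dynamics is cancelled exactly by $\tfrac{1}{\Gamma}\tilde{b}_U^{\top}\dot{\tilde{b}}_U$; this is the whole reason $\dot{\hat{b}}_U$ carries the factor $\overline{{\rm Ad}}_{\hat{\boldsymbol{T}}}^{\top}$ and the weight $\Gamma/\alpha_i$. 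What survives is $\dot{\mathcal{L}}=-k_p\sum_i\tfrac{1}{\alpha_i}E_i^{\top}(\Lambda_i^2+\mathbf{I}_3)E_i-k_w\sum_i\tfrac{1}{\alpha_i}(\Phi_i\Lambda_iE_i)^{\top}\sum_j\Phi_j\Lambda_jE_j-\sum_i\tfrac{1}{\alpha_i}E_i^{\top}\Lambda_i\mu_ie_i$. Taking a common $\alpha_i=\alpha$ turns the middle term into $-\tfrac{k_w}{\alpha}\big\|\sum_j\Phi_j\Lambda_jE_j\big\|^2\le0$, while $\Lambda_i^2+\mathbf{I}_3\succeq\mathbf{I}_3$ makes the first term dominate $-\tfrac{k_p}{\alpha}\sum_i\|E_i\|^2$.

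The residual $-\sum_i\tfrac{1}{\alpha_i}E_i^{\top}\Lambda_i\mu_ie_i$ is not sign-definite, but $\mu_i\to0$ (indeed exponentially) because $\xi_{i,k}(t)\to\xi_{i,k}^{\infty}>0$ in \eqref{eq:SLAM_Presc}; I would treat it as a vanishing perturbation of a nominally exponentially stable $E$-system, or simply dominate it by the quartic-in-$E$ growth of the $k_p\Lambda_i^2$ term for large $\|E_i\|$. This delivers $(E,\tilde{b}_U)\in\mathcal{L}_{\infty}$ and $E\in\mathcal{L}_2$, and — crucially — boundedness of each $E_{i,k}$ is precisely what keeps $e_{i,k}/\xi_{i,k}$ strictly inside $(-\underline{\delta}_{i,k},\bar{\delta}_{i,k})$, so the prescribed funnel \eqref{eq:SLAM_ePos}–\eqref{eq:SLAM_eNeg} is never violated. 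Barbalat's lemma then gives $E\to\mathcal{S}$ in \eqref{eq:SLAM_Set2} (claim~1); because $\bar{\delta}_{i,k}=\underline{\delta}_{i,k}$ forces $\mathcal{F}(0)=0$, the relation $e_i=\xi_i\mathcal{F}(E_i)$ yields $e_i\to\underline{\mathbf{0}}_3$ (claim~2).

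For claim~3 I would revisit $\dot{E}_i$ along the trajectory: once $E\to0$ every contribution vanishes except $-\Lambda_i\Phi_i^{\top}\overline{{\rm Ad}}_{\hat{\boldsymbol{T}}}\tilde{b}_U$, and a second Barbalat argument (using boundedness of $\ddot{E}_i$) forces $\Phi_i^{\top}\overline{{\rm Ad}}_{\hat{\boldsymbol{T}}}\tilde{b}_U\to0$ for every $i$. Stacking over $i$, Assumption~\ref{Assumption:Feature} (three or more features spanning a plane) is what guarantees the stacked regressor $[\Phi_1,\dots,\Phi_n]^{\top}\overline{{\rm Ad}}_{\hat{\boldsymbol{T}}}$ has full column rank $6$, whence $\tilde{b}_U\to\underline{\mathbf{0}}_6$. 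Finally, for claim~4, differentiating $\tilde{\boldsymbol{T}}=\hat{\boldsymbol{T}}\boldsymbol{T}^{-1}$ gives $\dot{\tilde{\boldsymbol{T}}}=\big[\overline{{\rm Ad}}_{\hat{\boldsymbol{T}}}(\tilde{b}_U-W_U)\big]_{\wedge}\tilde{\boldsymbol{T}}$; since $W_U\to0$ as $E\to0$ and $\tilde{b}_U\to0$, and both decay fast enough to be integrable, $\tilde{\boldsymbol{T}}(t)$ is Cauchy and converges to a constant, i.e. $\tilde{R}\to R_c\in\mathbb{SO}(3)$ and $\tilde{P}\to P_c$. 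I expect the main obstacle to be the indefinite vanishing $\mu_i$-term: dominating it cleanly is what distinguishes a mere $E\to0$ conclusion from the claimed \emph{exponential} rate, and that same rate is what makes the integrability in claim~4 — and, through persistency/observability supplied by Assumption~\ref{Assumption:Feature}, the bias convergence in claim~3 — go through.
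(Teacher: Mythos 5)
Your proposal follows essentially the same route as the paper's proof: the identical Lyapunov function $\mathcal{L}=\sum_{i}\tfrac{1}{2\alpha_{i}}\left\Vert E_{i}\right\Vert ^{2}+\tfrac{1}{2}\tilde{b}_{U}^{\top}\Gamma^{-1}\tilde{b}_{U}$, the same exact cancellation of the bias cross-term by the adaptive law \eqref{eq:SLAM_b_est_dot_f2}, absorption of the vanishing $\mu_{i}e_{i}$ term into the $k_{p}$ damping, the tanh-form of $e_{i,k}=\bar{\delta}_{i,k}\xi_{i,k}\mathcal{F}(E_{i,k})$ for claim (2), and the same Barbalat-plus-full-column-rank-of-the-stacked-regressor argument (via Assumption \ref{Assumption:Feature}) for $\tilde{b}_{U}\rightarrow0$ and the constancy of $\tilde{\boldsymbol{T}}$. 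Your closing caveats about the indefinite $\mu_{i}$-term and the integrability needed for claim (4) correctly flag the two points the paper treats most lightly (it imposes $k_{p}>\bar{k}_{\delta}\bar{k}_{\xi}|\bar{\mu}|$ for the former and passes directly from $\dot{\tilde{\boldsymbol{T}}}\rightarrow0$ to convergence for the latter), but the overall argument matches.
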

\begin{proof} Consider the pose error described in \eqref{eq:SLAM_T_error}.
	The pose error dynamics are
	\begin{align}
	\dot{\tilde{\boldsymbol{T}}} & =\dot{\hat{\boldsymbol{T}}}\boldsymbol{T}^{-1}+\hat{\boldsymbol{T}}\dot{\boldsymbol{T}}^{-1}\nonumber \\
	& =\hat{\boldsymbol{T}}\left[U_{m}-\hat{b}_{U}-W_{U}\right]_{\wedge}\boldsymbol{T}^{-1}-\hat{\boldsymbol{T}}\boldsymbol{T}^{-1}\boldsymbol{\dot{T}}\boldsymbol{T}^{-1}\nonumber \\
	& =\hat{\boldsymbol{T}}\left[U+\tilde{b}_{U}-W_{U}\right]_{\wedge}\boldsymbol{T}^{-1}-\hat{\boldsymbol{T}}\left[U\right]_{\wedge}\boldsymbol{T}^{-1}\nonumber \\
	& ={\rm Ad}_{\hat{\boldsymbol{T}}}\left(\left[\tilde{b}_{U}-W_{U}\right]_{\wedge}\right)\tilde{\boldsymbol{T}}\label{eq:SLAM_T_error_dot}
	\end{align}
	Note that $\boldsymbol{\dot{T}}^{-1}=-\boldsymbol{T}^{-1}\boldsymbol{\dot{T}}\boldsymbol{T}^{-1}$.
	As such, the dynamics of $\overset{\circ}{e}_{i}$ in \eqref{eq:SLAM_e}
	are
	\begin{align}
	\overset{\circ}{\dot{e}}_{i} & =\overset{\circ}{\dot{\hat{{\rm p}}}}_{i}-\dot{\tilde{\boldsymbol{T}}}\,\overline{{\rm p}}_{i}-\tilde{\boldsymbol{T}}\,\dot{\overline{{\rm p}}}_{i}\nonumber \\
	& =\overset{\circ}{\dot{\hat{{\rm p}}}}_{i}-{\rm Ad}_{\hat{\boldsymbol{T}}}\left(\left[\tilde{b}_{U}-W_{U}\right]_{\wedge}\right)\tilde{\boldsymbol{T}}\,\overline{{\rm p}}_{i}\label{eq:SLAM_e_dot}
	\end{align}
	In the light of expressions in \eqref{eq:SLAM_Adjoint} and \eqref{eq:SLAM_Adjoint_MAP},
	one finds ${\rm Ad}_{\hat{\boldsymbol{T}}}\left(\left[\tilde{b}_{U}-W_{U}\right]_{\wedge}\right)=\left[\overline{{\rm Ad}}_{\hat{\boldsymbol{T}}}(\tilde{b}_{U}-W_{U})\right]_{\wedge}$
	such that
	\begin{align}
	{\rm Ad}_{\hat{\boldsymbol{T}}}([\tilde{b}_{U}]_{\wedge})\tilde{\boldsymbol{T}}\,\overline{{\rm p}}_{i} & =\left[\begin{array}{c}
	\left[\hat{R}y_{i}+\hat{P}\right]_{\times}\\
	\mathbf{I}_{3}
	\end{array}\right]^{\top}\overline{{\rm Ad}}_{\hat{\boldsymbol{T}}}\tilde{b}_{U}\label{eq:SLAM_expression1}
	\end{align}
	Accordingly, the expression in \eqref{eq:SLAM_e_dot} becomes
	\begin{align}
	\dot{e}_{i} & =\dot{\hat{{\rm p}}}_{i}-\left[\begin{array}{c}
	\left[\hat{R}y_{i}+\hat{P}\right]_{\times}\\
	\mathbf{I}_{3}
	\end{array}\right]^{\top}\overline{{\rm Ad}}_{\hat{\boldsymbol{T}}}\left(\tilde{b}_{U}-W_{U}\right)\label{eq:SLAM_e_dot_Final}
	\end{align}
	Hence, the transformed error dynamics $\dot{E}_{i}=\Lambda_{i}\left(\dot{e}_{i}-\mu_{i}e_{i}\right)$
	can be obtained by \eqref{eq:SLAM_Trans_dot}, given \eqref{eq:SLAM_e}
	and \eqref{eq:SLAM_e_dot_Final} for all $i=1,2,\ldots,n$. Define
	the following candidate Lyapunov function $\mathcal{L}=\mathcal{L}(E_{1},E_{2},\ldots,E_{n},\tilde{b}_{U})$
	\begin{equation}
	\mathcal{L}=\sum_{i=1}^{n}\frac{1}{2\alpha_{i}}\left\Vert E_{i}\right\Vert ^{2}+\frac{1}{2}\tilde{b}_{U}^{\top}\Gamma^{-1}\tilde{b}_{U}\label{eq:SLAM_Lyap1}
	\end{equation}
	From \eqref{eq:SLAM_Trans_dot} and \eqref{eq:SLAM_Lyap1}, and differentiating
	$\mathcal{L}$ one obtains
	\begin{align}
	\dot{\mathcal{L}}= & \sum_{i=1}^{n}\frac{1}{\alpha_{i}}E_{i}^{\top}\dot{E}_{i}-\tilde{b}_{U}^{\top}\Gamma^{-1}\dot{\hat{b}}_{U}\nonumber \\
	= & -\sum_{i=1}^{n}\frac{1}{\alpha_{i}}E_{i}^{\top}\Lambda_{i}\left[\begin{array}{c}
	\left[\hat{R}y_{i}+\hat{P}\right]_{\times}\\
	\mathbf{I}_{3}
	\end{array}\right]^{\top}\overline{{\rm Ad}}_{\hat{\boldsymbol{T}}}\left(\tilde{b}_{U}-W_{U}\right)\nonumber \\
	& +\sum_{i=1}^{n}\frac{1}{\alpha_{i}}E_{i}^{\top}\Lambda_{i}\left(\dot{\hat{{\rm p}}}_{i}-\mu_{i}e_{i}\right)-\tilde{b}_{U}^{\top}\Gamma^{-1}\dot{\hat{b}}_{U}\label{eq:SLAM_Lyap2_dot2}
	\end{align}
	By \eqref{eq:SLAM_e_Trans} and \eqref{eq:SLAM_trans2} $|e_{i,k}|\leq\mu_{i,k}\bar{\delta}_{i,k}\xi_{i,k}|E_{i,k}|$,
	moreover, $\mu_{i,k}$ is a vanishing component. Consider $\bar{k}_{\delta}=\max\{\bar{\delta}_{1,1},\bar{\delta}_{1,2},\ldots,\bar{\delta}_{n,3}\}$,
	$\bar{k}_{\xi}=\max\{\xi_{1,1}^{0},\xi_{1,2}^{0},\ldots,\xi_{n,1}^{0}\}$,
	and the negative vanishing component $\bar{\mu}=\min\{\mu_{1,1},\mu_{1,2},\ldots,\mu_{n,3}\}\leq0$.
	Substituting $W_{U}$, $\dot{\hat{b}}_{U}$ and $\dot{\hat{{\rm p}}}_{i}$
	with their definitions in \eqref{eq:SLAM_W_f2}, \eqref{eq:SLAM_b_est_dot_f2},
	and \eqref{eq:SLAM_p_est_dot_f2}, respectively, one obtains
	\begin{align}
	\dot{\mathcal{L}}\leq & -c_{p}\sum_{i=1}^{n}\frac{1}{\alpha_{i}}\left\Vert E_{i}\right\Vert ^{2}-k_{w}\left\Vert \sum_{i=1}^{n}\frac{1}{\alpha_{i}}\Lambda_{i}E_{i}\right\Vert ^{2}\nonumber \\
	& -k_{w}\left\Vert \sum_{i=1}^{n}\frac{1}{\alpha_{i}}\left[\hat{R}y_{i}+\hat{{\rm p}}_{i}\right]_{\times}\Lambda_{i}E_{i}\right\Vert ^{2}\label{eq:SLAM_Lyap2_final}
	\end{align}
	where $c_{p}=k_{p}-\bar{k}_{\delta}\bar{k}_{\xi}|\bar{\mu}|$ such
	that $k_{p}>\bar{k}_{\delta}\bar{k}_{\xi}|\bar{\mu}|$. Based on \eqref{eq:SLAM_Lyap2_final},
	$\dot{\mathcal{L}}$ is negative definite such that $\mathcal{L}\rightarrow0$
	which in turn implies that $(E_{1},E_{2},\ldots,E_{n})$ converges
	asymptotically to $\mathcal{S}$ defined in \eqref{eq:SLAM_Set2}
	for all $E_{i}\left(0\right)\in\mathcal{L}_{\infty}$ proving item
	(1) in Theorem \ref{thm:PPF}. It becomes apparent that $\mathcal{L}\in\mathcal{L}_{\infty}$
	and that a finite $\lim_{t\rightarrow\infty}\mathcal{L}$ exists.
	Given that $\bar{\delta}_{i,k}=\underline{\delta}_{i,k}$, and in
	the light of \eqref{eq:SLAM_e_Trans} and \eqref{eq:SLAM_trans2},
	it is given that
	\begin{align*}
	e_{i,k} & =\bar{\delta}_{i,k}\xi_{i,k}\frac{\exp(E_{i,k})-\exp(-E_{i,k})}{\exp(E_{i,k})+\exp(-E_{i,k})},\hspace{1em}\bar{\delta}_{i,k}=\underline{\delta}_{i,k}
	\end{align*}
	This implies that $E_{i,k}\neq0$ for $e_{i,k}\neq0$ and $E_{i,k}=0$
	only at $e_{i,k}=0$ proving item (2) in Theorem \ref{thm:PPF}. The
	fact that $E_{i,k}$ and $e_{i,k}$ converge to zero indicates that
	$\ddot{E}_{i}$ and $\ddot{e}_{i}$ remain bounded, and thereby $\dot{E}_{i}$
	and $\dot{e}_{i}$ are uniformly continuous. Based on Barbalat Lemma,
	$\dot{E}_{i}\rightarrow0$ and $\dot{e}_{i}\rightarrow0$ as $t\rightarrow\infty$.
	According to the definition of $\tilde{b}_{U}$ in \eqref{eq:SLAM_b_error}
	along with \eqref{eq:SLAM_b_est_dot_f2}, $\dot{\tilde{b}}_{U}=-\dot{\hat{b}}_{U}$,
	and as a result $\dot{\tilde{b}}_{U}\rightarrow0$ as $E_{i}\rightarrow0$.%
	{} Also, from \eqref{eq:SLAM_W_f2}, $W_{U}\rightarrow0$ as $E_{i}\rightarrow0$.
	Additionally from \eqref{eq:SLAM_p_est_dot_f2}, $\dot{{\rm \hat{p}}}_{i}\rightarrow0$
	as $E_{i}\rightarrow0$ and $e_{i}\rightarrow0$. Consequently, $\lim_{t\rightarrow\infty}\overset{\circ}{\dot{e}}_{i}=\lim_{t\rightarrow\infty}-{\rm Ad}_{\hat{\boldsymbol{T}}}\left(\left[\tilde{b}_{U}\right]_{\wedge}\right)\tilde{\boldsymbol{T}}\,\overline{{\rm p}}_{i}=0$
	that is $\lim_{t\rightarrow\infty}\overset{\circ}{\dot{e}}_{i}=\lim_{t\rightarrow\infty}-\hat{\boldsymbol{T}}\left[\tilde{b}_{U}\right]_{\wedge}\overline{y}_{i}=0$.
	It follows that $\lim_{t\rightarrow\infty}\left[\tilde{b}_{U}\right]_{\wedge}\overline{y}_{i}=\lim_{t\rightarrow\infty}\left[-\left[y_{i}\right]_{\times},\mathbf{I}_{3}\right]\tilde{b}_{U}=0$
	for all $i=1,2,\ldots,n$. Let 
	\[
	M=\left[\begin{array}{cc}
	-\left[y_{1}\right]_{\times} & \mathbf{I}_{3}\\
	\vdots & \vdots\\
	-\left[y_{n}\right]_{\times} & \mathbf{I}_{3}
	\end{array}\right]\in\mathbb{R}^{3n\times6},\hspace{1em}n\geq3
	\]
	As specified in Assumption \ref{Assumption:Feature}, number of features
	is greater than or equal to 3. Thus $M$ has full column rank and
	$\lim_{t\rightarrow\infty}M\tilde{b}_{U}=0$ signifying that $\lim_{t\rightarrow\infty}\tilde{b}_{U}=0$
	showing item (3) in Theorem \ref{thm:PPF}. Accordingly, from \eqref{eq:SLAM_Lyap2_final},
	$\ddot{\mathcal{L}}$ is bounded. In the light of Barbalat Lemma,
	$\dot{\mathcal{L}}$ is uniformly continuous. Since both $\tilde{b}_{U}\rightarrow0$
	and $W_{U}\rightarrow0$ as $t\rightarrow\infty$, $\dot{\tilde{\boldsymbol{T}}}\rightarrow0$
	and in turn $\tilde{\boldsymbol{T}}\rightarrow\boldsymbol{T}_{c}(R_{c},P_{c})$
	where $\boldsymbol{T}_{c}(R_{c},P_{c})\in\mathbb{SE}\left(3\right)$
	denotes a constant matrix with $R_{c}\in\mathbb{SO}\left(3\right)$
	and $P_{c}\in\mathbb{R}^{3}$. Thus, one concludes that $\lim_{t\rightarrow\infty}\tilde{R}=R_{c}$
	and $\lim_{t\rightarrow\infty}\tilde{P}=P_{c}$ completing the proof.\end{proof}

\section{Simulation results \label{sec:SE3_Simulations}}

This section explores the performance of the nonlinear observer for
SLAM on the Lie group $\mathbb{SLAM}_{n}\left(3\right)$ with systematic
convergence. Consider the angular velocity to be $\Omega=[0,0,0.2]^{\top}({\rm rad/sec})$
and the translational velocity to be $V=[1.8,0,0]^{\top}({\rm m/sec})$.
Let the true initial attitude and position of the vehicle be $R\left(0\right)=\mathbf{I}_{3}$
and $P\left(0\right)=[0,0,3]^{\top}$, respectively. Additionally,
consider four features fixed with respect to the inertial-frame at
the following locations: ${\rm p}_{1}=[8,8,0]^{\top}$, ${\rm p}_{2}=[-8,8,0]^{\top}$,
${\rm p}_{3}=[8,-8,0]^{\top}$, and ${\rm p}_{4}=[-8,-8,0]^{\top}$.
In practice, $b_{U}$ and $n_{U}$ are non-zero. Hence, let the group
velocity vector bias be $b_{U}=\left[b_{\Omega}^{\top},b_{V}^{\top}\right]^{\top}$
with $b_{\Omega}=[0.09,0.1,-0.1]^{\top}({\rm rad/sec})$ and $b_{V}=[0.2,0.2,-0.2]^{\top}({\rm m/sec})$,
and noise $n_{U}$ of zero mean and standard deviation of 0.2. Let
the initial estimate of attitude and position be $\hat{R}\left(0\right)=\mathbf{I}_{3}$
and $\hat{P}\left(0\right)=[0,0,0]^{\top}$, respectively, and let
the initial estimates of the four features be $\hat{{\rm p}}_{1}\left(0\right)=\hat{{\rm p}}_{2}\left(0\right)=\hat{{\rm p}}_{3}\left(0\right)=\hat{{\rm p}}_{4}\left(0\right)=[0,0,0]^{\top}$.
Design parameters and initial bias estimate are chosen as follows:
$\alpha_{i}=0.05$, $\Gamma=10\mathbf{I}_{6}$, $k_{w}=3$, $k_{p}=3$,
$\ell_{i,k}=1$, $\xi_{i,k}^{\infty}=0.1$, $\xi_{i,k}^{0}=\bar{\delta}_{i,k}=\underline{\delta}_{i,k}=1.2e_{i,k}\left(0\right)+1.8$,
and $\hat{b}_{U}\left(0\right)=\underline{\mathbf{0}}_{6}$ for all
$i=1,2,3,4$ and $k=1,2,3$. 

Fig. \ref{fig:SLAM_3d} depicts the true and estimated trajectories
of the vehicle and the position of the features. The true vehicle
trajectory is plotted as a solid black line with a black circle marking
the final destination. The true feature positions are marked as black
circles at ${\rm p}_{1}$, ${\rm p}_{2}$, ${\rm p}_{3}$ and ${\rm p}_{4}$.
Blue and red are used for the observer output.  The estimated trajectory
of the vehicle is represented by a blue dashed line which tracks the
travel path from the origin (0,0,0) to its final destination marked
with a blue star\textcolor{blue}{{} $\star$}. The feature position
estimates, indicated by the red dashed lines, initiate at the origin
(0,0,0) and then gradually diverge to the true feature locations marked
with red stars \textcolor{red}{$\star$}. Both vehicle trajectory
and feature positions commence at the origin with large initialization
error and converge successfully to the true trajectory and locations,
respectively. As such, Fig. \ref{fig:SLAM_3d} reveals impressive
tracking capabilities of the proposed observer.

\begin{figure}[h]
	\centering{}\includegraphics[scale=0.29]{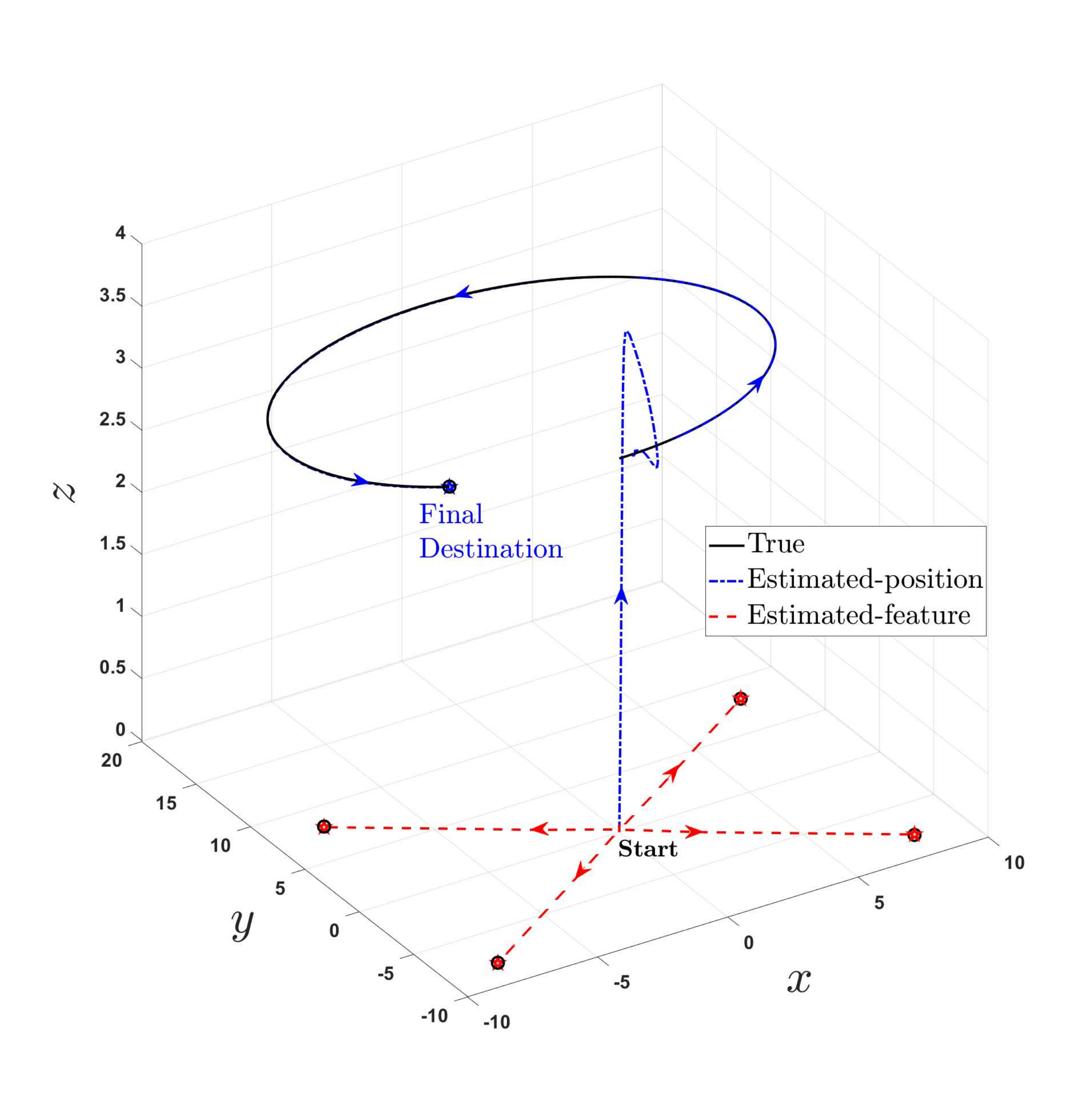}\caption{Output trajectories of the observer vs true vehicle's and features
		position.}
	\label{fig:SLAM_3d}
\end{figure}

Fig. \ref{fig:SLAM_error} illustrates the error trajectories of $e_{i}=[e_{i1},e_{i2},e_{i3}]^{\top}$
for $i=1,2,3,4$ plotted in red, blue, and magenta with respect to
the dynamically reducing boundaries of PPF plotted in black. As shown
in Fig. \ref{fig:SLAM_error} large initial error does not surpass
the boundaries of the predefined large set and reduces following the
dynamically reducing boundaries to a predefined small set. Therefore,
the simulation results align with the theoretical results and demonstrate
outstanding estimation capability of the proposed observer.

\begin{figure}[h]
	\centering{}\includegraphics[scale=0.3]{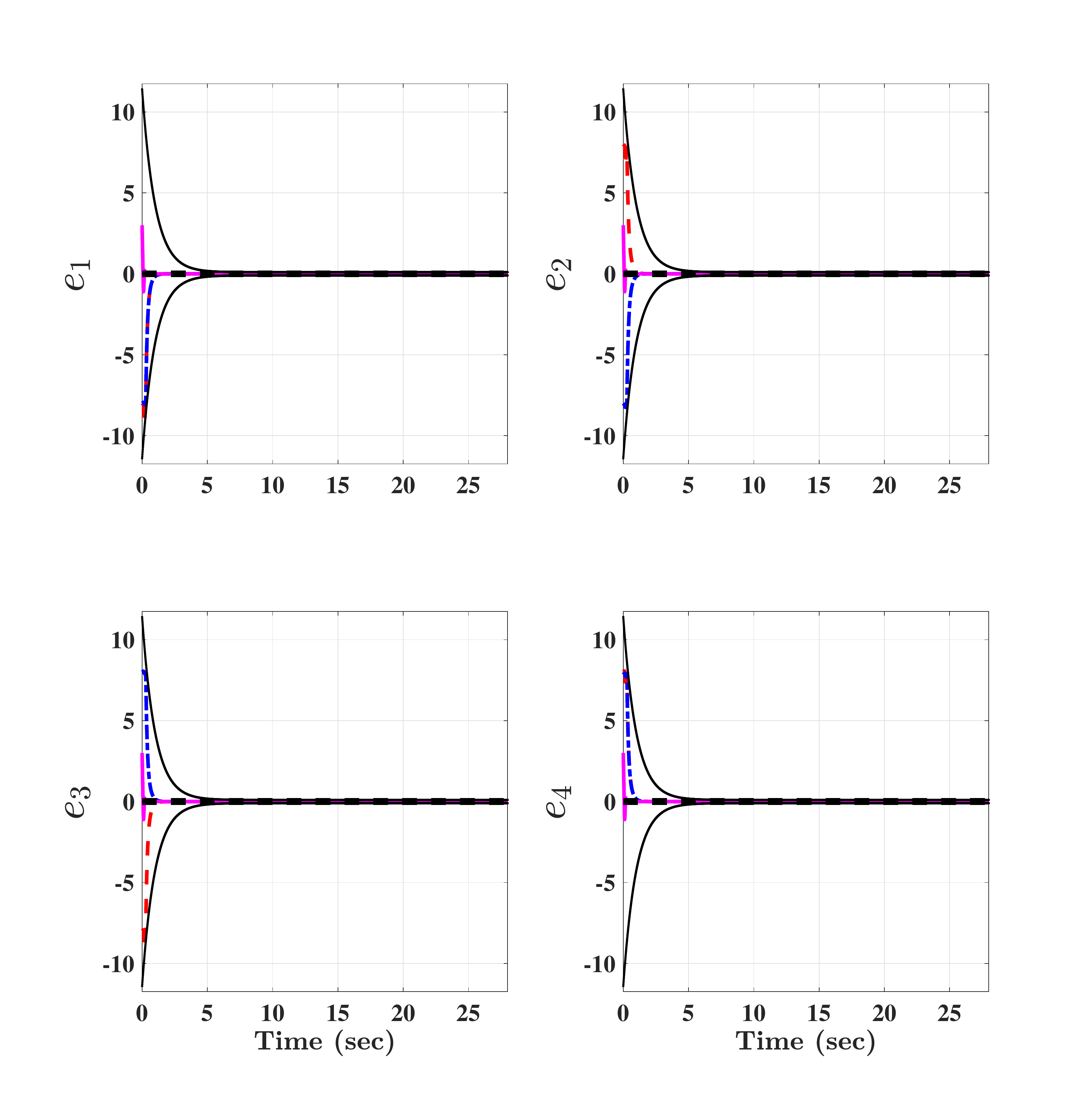}\caption{Error trajectories of $e_{i}=[e_{i1},e_{i2},e_{i3}]^{\top}$for $i=1,2,3,4$
		with respect to dynamically reducing boundaries of PPF.}
	\label{fig:SLAM_error}
\end{figure}

\section{Conclusion \label{sec:SE3_Conclusion}}

This paper presented a novel nonlinear observer for Simultaneous Localization
and Mapping (SLAM) problem on the Lie group of $\mathbb{SLAM}_{n}\left(3\right)$.
The observer has been developed such that the error function is guaranteed
to follow predefined measures of transient and steady-state performance.
Moreover, it is able to compensate for unknown bias attached to angular
and translational velocities. As has been demonstrated in the Simulation
Section, the proposed observer has the ability to produce reasonable
results localizing the unknown pose of the vehicle and concurrently
mapping the unknown environment with respect to available measurements
of angular velocity, translational velocity, and features obtained
in the body-frame.

\section*{Acknowledgment}

The authors would like to thank \textbf{Maria Shaposhnikova} for proofreading
the article.

\bibliographystyle{IEEEtran}
\bibliography{bib_SLAM}

\section*{Appendix\label{sec:SO3_PPF_STCH_AppendixA} }
\begin{center}
	\textbf{\large{}{}{}{}{}{}{}{}{}{}{}{}Quaternion Representation}{\large{}{}{}
	} 
	\par\end{center}

\noindent Define $Q=[q_{0},q^{\top}]^{\top}\in\mathbb{S}^{3}$ as
a unit-quaternion with $q_{0}\in\mathbb{R}$ and $q\in\mathbb{R}^{3}$
such that $\mathbb{S}^{3}=\{\left.Q\in\mathbb{R}^{4}\right|||Q||=\sqrt{q_{0}^{2}+q^{\top}q}=1\}$.
$Q^{-1}=[\begin{array}{cc}
q_{0} & -q^{\top}\end{array}]^{\top}\in\mathbb{S}^{3}$ denotes the inverse of $Q$. Define $\odot$ as a quaternion product
where the quaternion multiplication of $Q_{1}=[\begin{array}{cc}
q_{01} & q_{1}^{\top}\end{array}]^{\top}\in\mathbb{S}^{3}$ and $Q_{2}=[\begin{array}{cc}
q_{02} & q_{2}^{\top}\end{array}]^{\top}\in\mathbb{S}^{3}$ is 
\[
Q_{1}\odot Q_{2}=\left[\begin{array}{c}
q_{01}q_{02}-q_{1}^{\top}q_{2}\\
q_{01}q_{2}+q_{02}q_{1}+[q_{1}]_{\times}q_{2}
\end{array}\right]
\]
The mapping from unit-quaternion ($\mathbb{S}^{3}$) to $\mathbb{SO}\left(3\right)$
is described by $\mathcal{R}_{Q}:\mathbb{S}^{3}\rightarrow\mathbb{SO}\left(3\right)$
\begin{align}
\mathcal{R}_{Q} & =(q_{0}^{2}-||q||^{2})\mathbf{I}_{3}+2qq^{\top}+2q_{0}\left[q\right]_{\times}\in\mathbb{SO}\left(3\right)\label{eq:NAV_Append_SO3}
\end{align}
The quaternion identity is described by $Q_{{\rm I}}=[\pm1,0,0,0]^{\top}$
with $\mathcal{R}_{Q_{{\rm I}}}=\mathbf{I}_{3}$. Visit \cite{hashim2019AtiitudeSurvey}
for more information. Define the estimate of $Q=[q_{0},q^{\top}]^{\top}\in\mathbb{S}^{3}$
as $\hat{Q}=[\hat{q}_{0},\hat{q}^{\top}]^{\top}\in\mathbb{S}^{3}$
with 
\[
\mathcal{R}_{\hat{Q}}=(\hat{q}_{0}^{2}-||\hat{q}||^{2})\mathbf{I}_{3}+2\hat{q}\hat{q}^{\top}+2\hat{q}_{0}\left[\hat{q}\right]_{\times}\in\mathbb{SO}\left(3\right)
\]
see the map in \eqref{eq:NAV_Append_SO3}. Define the map
\begin{align*}
\left[\begin{array}{c}
0\\
\mathbf{Y}(\hat{Q},x)
\end{array}\right] & =\hat{Q}\odot\left[\begin{array}{c}
0\\
x
\end{array}\right]\odot\hat{Q}^{-1}
\end{align*}
where $\mathbf{Y}(\hat{Q},y_{i})\in\mathbb{R}^{3}$, $x\in\mathbb{R}^{3}$
and $\hat{Q}\in\mathbb{S}^{3}$. The equivalent quaternion representation
and complete implementation steps of the observer in \eqref{eq:SLAM_T_est_dot_f2},
\eqref{eq:SLAM_p_est_dot_f2}, \eqref{eq:SLAM_b_est_dot_f2}, and
\eqref{eq:SLAM_W_f2} is:
\[
\begin{cases}
\overset{\circ}{e}_{i} & =\left[\begin{array}{c}
\hat{{\rm p}}_{i}\\
1
\end{array}\right]-\left[\begin{array}{cc}
\mathcal{R}_{\hat{Q}} & \hat{P}\\
\underline{\mathbf{0}}_{3}^{\top} & 1
\end{array}\right]\left[\begin{array}{c}
y_{i}\\
1
\end{array}\right]=\left[\begin{array}{c}
e_{i}\\
0
\end{array}\right]\\
& \hspace{9em},\hspace{1em}i=1,2,\ldots,n\\
E_{i,k} & =\frac{1}{2}\text{ln}\frac{\underline{\delta}_{i,k}+e_{i,k}/\xi_{i,k}}{\bar{\delta}_{i,k}-e_{i,k}/\xi_{i,k}},\hspace{1em}k=1,2,3\\
\overline{{\rm Ad}}_{\hat{\boldsymbol{T}}}^{\top} & =\left[\begin{array}{cc}
\mathcal{R}_{\hat{Q}}^{\top} & -\mathcal{R}_{\hat{Q}}^{\top}\left[\hat{P}\right]_{\times}\\
0_{3\times3} & \mathcal{R}_{\hat{Q}}^{\top}
\end{array}\right]\\
\overline{{\rm Ad}}_{\hat{\boldsymbol{T}}^{-1}} & =\left[\begin{array}{cc}
\mathcal{R}_{\hat{Q}}^{\top} & 0_{3\times3}\\
-\mathcal{R}_{\hat{Q}}^{\top}\left[\hat{P}\right]_{\times} & \mathcal{R}_{\hat{Q}}^{\top}
\end{array}\right]\\
\chi & =\Omega_{m}-\hat{b}_{\Omega}-W_{\Omega}\\
\dot{\hat{Q}} & =\frac{1}{2}\left[\begin{array}{cc}
0 & -\chi^{\top}\\
\chi & -\left[\chi\right]_{\times}
\end{array}\right]\hat{Q},\hspace{1em}\hat{Q}(0)=Q_{{\rm I}}\\
\dot{\hat{P}} & =\mathbf{Y}\left(\hat{Q},V_{m}-\hat{b}_{V}-W_{V}\right)\\
\dot{{\rm \hat{p}}}_{i} & =-k_{p}\left(\Lambda_{i}+\Lambda_{i}^{-1}\right)E_{i}\\
\dot{\hat{b}}_{U} & =-\sum_{i=1}^{n}\frac{\Gamma}{\alpha_{i}}\overline{{\rm Ad}}_{\hat{\boldsymbol{T}}}^{\top}\left[\begin{array}{c}
\left[\mathbf{Y}(\hat{Q},y_{i})+\hat{P}\right]_{\times}\\
\mathbf{I}_{3}
\end{array}\right]\Lambda_{i}E_{i}\\
W_{U} & =-\sum_{i=1}^{n}k_{w}\overline{{\rm Ad}}_{\hat{\boldsymbol{T}}^{-1}}\left[\begin{array}{c}
\left[\mathbf{Y}(\hat{Q},y_{i})+\hat{P}\right]_{\times}\\
\mathbf{I}_{3}
\end{array}\right]\Lambda_{i}E_{i}
\end{cases}
\]

\end{document}